\documentclass[11pt, reqno]{elsarticle}%
\usepackage{lipsum}
\makeatletter
\def\ps@pprintTitle{%
 \let\@oddhead\@empty
 \let\@evenhead\@empty
 \def\@oddfoot{}%
 \let\@evenfoot\@oddfoot}
\makeatother

\usepackage{amsmath}
\usepackage{bbm}
\interdisplaylinepenalty=2500

\usepackage[dvipsnames]{xcolor}

\usepackage{array}
\usepackage{amssymb}
\usepackage{amsthm}
\theoremstyle{plain}

\usepackage{dsfont} 

\usepackage[margin=1in]{geometry}
\usepackage[bookmarks=false]{hyperref}
\usepackage{epstopdf}

\def\+{{\oplus}}

\newcounter{comments}
\setcounter{comments}{0}

\newtheorem*{theorem*}{Theorem}
\newtheorem{thm}{Theorem}[section]
\newtheorem{theorem}[thm]{Theorem}
\newtheorem{lemma}[thm]{Lemma}
\newtheorem{cor}[thm]{Corollary}

\newtheorem{defn}[thm]{Definition}
\newtheorem{example}[thm]{Example}
\newtheorem{remark}[thm]{Remark}

\numberwithin{equation}{section}

\setlength\parindent{0pt} 


\begin{document}

\title{On the number of non-degenerate canalizing Boolean functions}

\author[ia]{C. Kadelka\corref{cor1}\fnref{fn1}}
\ead{ckadelka@iastate.edu}

\cortext[cor1]{Corresponding author}
\fntext[fn1]{Supported by NSF Grants DMS-2424632 and DMS-2451973.}

\address[ia]{Department of Mathematics,
      Iowa State University, Ames, IA 50011, USA}

\begin{keyword}
Boolean functions \sep canalization \sep canalizing layers \sep degeneracy \sep enumeration
\sep nonlinear dynamics
\end{keyword}

\begin{abstract}
Canalization is a key organizing principle in complex systems, particularly in gene regulatory networks. It describes how certain input variables exert dominant control over a function’s output, thereby imposing hierarchical structure and conferring robustness to perturbations. Degeneracy, in contrast, captures redundancy among input variables and reflects the complete dominance of some variables by others. Both properties influence the stability and dynamics of discrete dynamical systems, yet their combinatorial underpinnings remain incompletely understood. Here, we derive recursive formulas for counting Boolean functions with prescribed numbers of essential variables and given canalizing properties. In particular, we determine the number of non-degenerate canalizing Boolean functions -- that is, functions for which all variables are essential and at least one variable is canalizing. Our approach extends earlier enumeration results on canalizing and nested canalizing functions. It provides a rigorous foundation for quantifying how frequently canalization occurs among random Boolean functions and for assessing its pronounced over-representation in biological network models, where it contributes to both robustness and to the emergence of distinct regulatory roles.
\end{abstract}

\maketitle

\section{Introduction}
Boolean networks provide a fundamental mathematical abstraction for studying discrete dynamical systems across disciplines, from digital circuits and fault-tolerant logic design to gene regulatory and signaling networks~\cite{kauffman1969metabolic}. Each node in such a network is governed by a Boolean function $f: \{0,1\}^n \rightarrow \{0,1\}$ that maps the states of its regulators to its own state, thereby encoding the local regulatory logic and collectively determining the system’s dynamics. 
Despite their simplicity, Boolean networks can exhibit a vast diversity of dynamical behaviors, including multistability, oscillations, and critical transitions~\cite{balleza2008critical,torres2012criticality,schwab2020concepts}.
Understanding how the structural and combinatorial properties of the underlying Boolean functions shape such emergent dynamics remains a central question in the study of complex systems and nonlinear dynamics.

Two important notions in this context are \emph{canalization} and \emph{degeneracy}. 
Originally introduced by Kauffman in his pioneering work on genetic regulatory circuits~\cite{kauffman1974large}, canalization describes the dominance of certain input variables in determining the output of a Boolean function. 
Formally, a function is canalizing if there exists a variable $x_i$ and an input value $a_i \in \{0,1\}$ such that fixing $x_i = a_i$ determines the function's output regardless of the remaining inputs. 
Canalization thus captures hierarchical control among variables, a property that has been linked to robustness in biological and engineered systems~\cite{kauffman2004genetic,kadelka2017influence,kadelka2024canalization}. 
A function is \emph{degenerate} if its output does not depend on all of its input variables -- equivalently, some variables are \emph{non-essential}.
Degeneracy therefore reflects the complete dominance of particular variables over the rest, whose values become irrelevant to the function’s output.

Recent work has established a rigorous mathematical framework for classifying and enumerating canalizing Boolean functions according to their \emph{canalizing depth} and \emph{layer structure}~\cite{kadelka2017influence,layne2012nested,he2016stratification,dimitrova2022revealing}. 
In this framework, a function is $k$-canalizing if there exists an ordered sequence of $k$ variables such that fixing them successively in their canalizing inputs completely determines the function. 
The class of \emph{nested canalizing functions (NCFs)} corresponds to the maximal case $k=n$~\cite{kauffman2003random}. 
Such functions are known to exhibit strong dynamical stability when used as update rules in Boolean networks~\cite{kauffman2004genetic,kadelka2017influence,Shmul04}. 
However, a comprehensive enumeration of canalizing functions that are also \emph{non-degenerate} -- that is, functions that depend on all their inputs -- has not yet been established. 
These functions are of particular theoretical interest because they represent logical rules that are simultaneously fully interactive and hierarchically constrained.

In this work, we derive recursive relations for counting Boolean functions with a specific number of essential variables and a specific canalizing depth. 
Our formulation provides useful recursive expressions for the number of non-degenerate canalizing and nested canalizing functions in $n$ inputs and generalizes prior enumeration results on canalizing and nested canalizing functions~\cite{he2016stratification,dimitrova2022revealing}. 
Beyond its intrinsic combinatorial interest, this analysis provides the quantitative foundation needed to assess how strongly canalization is over-represented in biological regulatory logic. Accurate prevalence estimates are essential for distinguishing genuine design principles from statistical artifacts, enabling a rigorous evaluation of the extent to which canalization and nested canalization are selectively favored in biological network architectures.

\section{Background}\label{sec:background} 
In this section we review the concepts of \emph{degeneracy} and \emph{canalization}. Without loss of generality, we consider Boolean functions defined over the field $\{0,1\}$. 

\begin{defn} \label{def:essential} 
A Boolean function $f(x_{1},\ldots, x_{n})$ is \emph{essential} in the variable $x_{i}$ if 
there exists some $\mathbf x \in \{0,1\}^n$ such that
$$f(\mathbf x) \neq f(\mathbf x \oplus e_i),$$
where $\oplus$ denotes addition modulo $2$ and $e_i$ is the $i$th unit vector. In this case, we call $x_i$ an \emph{essential} or \emph{non-degenerate} variable. Otherwise, $x_i$ is a \emph{non-essential} or \emph{degenerate} variable.
\end{defn}

\begin{defn} \label{def:degenerate} 
A Boolean function $f(x_{1},\ldots, x_{n})$ is \emph{non-degenerate} if it is essential in all its variables. Otherwise, it is \emph{degenerate}. 
\end{defn}

\begin{defn}\label{def:canalizing}
A Boolean function $f(x_{1},\ldots, x_{n})$ is canalizing if there exists a variable $x_i$, a Boolean function $g(x_1,\ldots,x_{i-1},x_{i+1},\ldots,x_n)$ and $a,b \in \{0,1\}$ such that
$$f(x_1,\ldots,x_n) = 
 \begin{cases}
  b & \ \text{if} \  x_i = a, \\
  g\not\equiv b & \ \text{if} \ x_i \neq a,
  \end{cases}
$$
in which case $x_i$ is called a canalizing variable, the input $a$ is the canalizing input, and the output value $b$ when $x_i=a$ is the corresponding canalized output.
\end{defn}

Some authors, e.g., Kauffman in his original work on canalization~\cite{kauffman1974large}, do not require $g\not\equiv b$. In that case, constant functions are considered canalizing. Requiring $g\not\equiv b$ implies that only essential variables can be canalizing, a natural assumption as we will see later.

\begin{defn}\label{def:k_canalizing}\cite{he2016stratification}
A Boolean function $f(x_1,\ldots,x_n)$ is $k$-canalizing, where $1 \leq k \leq n$, with respect to the permutation $\sigma \in \mathcal{S}_n$, inputs $a_1,\ldots,a_k$ and outputs $b_1,\ldots,b_k$, if
\begin{equation}\label{eq:kcanalizing}f(x_{1},\ldots,x_{n})=
\left\{\begin{array}[c]{ll}
b_{1} & x_{\sigma(1)} = a_1,\\
b_{2} & x_{\sigma(1)} \neq a_1, x_{\sigma(2)} = a_2,\\
b_{3} & x_{\sigma(1)} \neq a_1, x_{\sigma(2)} \neq a_2, x_{\sigma(3)} = a_3,\\
\vdots  & \vdots\\
b_{k} & x_{\sigma(1)} \neq a_1,\ldots,x_{\sigma(k-1)}\neq a_{k-1}, x_{\sigma(k)} = a_k,\\
g\not\equiv b_k & x_{\sigma(1)} \neq a_1,\ldots,x_{\sigma(k-1)}\neq a_{k-1}, x_{\sigma(k)} \neq a_k,
\end{array}\right.\end{equation}
where $g = g(x_{\sigma(k+1)},\ldots,x_{\sigma(n)})$ is a Boolean function on $n-k$ variables. When $g$ is not canalizing, the integer $k$ is the canalizing depth of $f$ (as in \cite{layne2012nested}) and the variables $x_{\sigma(1)}, \ldots, x_{\sigma(k)}$ are called \emph{conditionally canalizing}~\cite{dimitrova2022revealing}. If, in addition, $g$ is not constant, it is called the core function of $f$. If $f$ is not canalizing, we set its canalizing depth $k=0$. That is, we define all Boolean functions to be $0$-canalizing.
\end{defn}

Any Boolean function has a unique extended monomial form, in which the variables are partitioned into different canalizing layers based on their importance or dominance.

\begin{theorem}~\cite{he2016stratification}\label{thm:he}
Every Boolean function $f(x_1,\ldots,x_n)\not\equiv 0$ can be uniquely written as 
\begin{equation}\label{eq:matts_theorem}
    f(x_1,\ldots,x_n) = M_1(M_2(\cdots (M_{r-1}(M_rp_C + 1) + 1)\cdots)+ 1)+ q,
\end{equation}

where each $M_i = \displaystyle\prod_{j=1}^{k_i} (x_{i_j} + a_{i_j})$ is a non-constant extended monomial, $k = \displaystyle\sum_{i=1}^r k_i$ is the canalizing depth, and $p_C$ is the \emph{core polynomial} of $f$. Each $x_i$ appears in exactly one of $\{M_1,\ldots,M_r,p_C\}$, and the only restrictions are the following ``exceptional cases'':
\begin{enumerate}[(i)]
    \item If $p_C\equiv 1$ and $r\neq 1$, then $k_r\geq 2$;
    \item If $p_C\equiv 1$ and $r = 1$ and $k_1=1$, then $q=0$.
\end{enumerate}
When $f$ is not canalizing (i.e., when $k=0$), $p_C = f$.
\end{theorem}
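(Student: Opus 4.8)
The plan is to prove existence and uniqueness separately, both by induction on the number of variables $n$, working in the ring of Boolean functions $\mathbb{F}_2[x_1,\dots,x_n]/(x_i^2+x_i)$, where every function has a unique multilinear representation and the factor $(x_i+a)$ vanishes exactly when $x_i=a$. Throughout I exploit two elementary facts about a canalizing $f$ (in the sense of Definition~\ref{def:canalizing}, so that $g\not\equiv b$ and $f$ is non-constant): each canalizing variable has a \emph{unique} canalizing input (otherwise $f$ would be constant), and all canalizing variables share one common canalized output $b$. The latter is a short lemma: if $x_i$ canalizes to $b_i$ via $a_i$ and $x_j$ to $b_j$ via $a_j$, evaluating $f$ at any point with $x_i=a_i$ and $x_j=a_j$ gives simultaneously $f=b_i$ and $f=b_j$, so $b_i=b_j$.

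For \textbf{existence}, if $f$ is not canalizing we are done with $r=0$ and $p_C=f$. If $f$ is canalizing, let $x_{i_1},\dots,x_{i_{k_1}}$ be \emph{all} its canalizing variables, with inputs $a_{i_j}$ and common output $b$, and set $M_1=\prod_{j=1}^{k_1}(x_{i_j}+a_{i_j})$. Since $f\equiv b$ on $\{M_1=0\}$ while $\{M_1=1\}$ is the single point fixing each layer variable to $1-a_{i_j}$, the restriction $\tilde g:=f|_{M_1=1}$ depends only on the remaining $n-k_1$ variables, satisfies $f=M_1\tilde g+b$ with $\tilde g\not\equiv 0$, and I then apply the induction hypothesis to $\tilde g$. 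The decisive point is a \emph{forced flip}: if $\tilde g$ is itself canalizing, its common canalized output must be $1$, not $0$, for if it were $0$ then fixing any canalizing variable of $\tilde g$ at its input would give $f=M_1\cdot 0+b=b$, making that variable canalizing for $f$ and contradicting that $M_1$ already collects \emph{every} canalizing variable of $f$. Substituting the inductive form $\tilde g=(\text{inner})+1$ into $f=M_1\tilde g+b$ produces exactly the ``$+1$'' between layers, identifies $q=b$, and turns the layers and core of $\tilde g$ into $M_2,\dots,M_r$ and $p_C$ of $f$. One checks that condition (i) transfers automatically, precisely because the forced flip $q'=1$ excludes a singleton innermost layer sitting over a constant core; the only normalization actually required is the $r=1$ subcase $(x+a)+1=(x+\overline a)$, where one flips the input to enforce $q=0$, yielding a representative satisfying (ii).

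For \textbf{uniqueness}, given any representation satisfying (i)--(ii), I would show that $M_1$, the inputs $a_{i_j}$, and $q$ are read off intrinsically from $f$. A direct computation shows every variable of $M_1$ canalizes $f$ to $q$; conversely, no other variable does. For variables in $M_2,\dots,M_r$ the interposed ``$+1$'' together with $M_1\not\equiv\text{const}$ keeps $f$ non-constant after fixing; for variables in $p_C$ this holds because $p_C$ is non-canalizing; and conditions (i)--(ii) are exactly what rules out the degenerate $p_C\equiv 1$ collapses $(x+a)\cdot 1+1=(x+\overline a)$ in which a deep singleton layer would otherwise masquerade as a first-layer canalizing variable. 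Hence the canalizing set, the inputs, the output $q$, and therefore $\tilde g=f|_{M_1=1}$ are uniquely determined by $f$, and the result follows by applying the induction hypothesis to $\tilde g$.

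The step I expect to be the main obstacle is the bookkeeping around the exceptional cases: showing that conditions (i) and (ii) are simultaneously \emph{necessary} (some representation satisfies them) and \emph{sufficient} (they single out a unique one) requires tracking the collapse $(x+a)+1=(x+\overline a)$ across the full layer recursion, since this constant-core degeneracy is the sole source of non-uniqueness and the only place where the naive ``extract all canalizing variables'' procedure needs adjustment. Everything else reduces to the forced-flip argument and the multilinear factorization, which I expect to be routine once the common-output lemma is in place.
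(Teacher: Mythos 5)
First, note that the paper contains no proof of Theorem~\ref{thm:he}: the result is imported verbatim from~\cite{he2016stratification}, so your argument can only be judged on its own merits. Its architecture is the natural (and essentially the standard) one: peel off the first layer as the set of \emph{all} canalizing variables of $f$, prove a common-canalized-output lemma, use a forced-flip argument to generate the interleaved ``$+1$''s and to transfer condition (i), and let conditions (i)--(ii) absorb the collapse $(x+a)+1=(x+\overline a)$. That skeleton is sound, and your identification of the constant-core degeneracy as the sole source of non-uniqueness is exactly right.

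However, two of the statements you build on are false as written. (a) ``Each canalizing variable has a unique canalizing input (otherwise $f$ would be constant)'': $f=x_1$ is a counterexample, since $x_1=0$ canalizes to output $0$ and $x_1=1$ canalizes to output $1$, yet $f$ is not constant. The correct statement is that a canalizing variable has two canalizing inputs precisely when $f$ is a literal, $f=x_i+a$; for such $f$ both ``the input'' and ``the common output $b$'' are ill-defined, and your common-output lemma is silent here because it only compares two \emph{distinct} variables. This is exactly the case governed by exceptional condition (ii), so the literal case must be split off \emph{before} your opening construction (``collect all canalizing variables with their inputs and common output $b$'') is even well-posed; as written, the first step of the existence argument breaks down there. (b) The identity $f=M_1\tilde g+b$ with $\tilde g:=f|_{M_1=1}$ holds only when $b=0$. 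For $f=x_1x_2+1$ one has $M_1=x_1x_2$, $b=1$, and $f|_{M_1=1}=0$, so $M_1\tilde g+b\equiv 1\neq f$; moreover $\tilde g\equiv 0$ in this example, so your claim $\tilde g\not\equiv 0$ fails as well, and the induction would be applied to an object to which the theorem does not apply. The multiplier of $M_1$ must be $\hat g:=f|_{M_1=1}+b$: then $f=M_1\hat g+b$, $\hat g\not\equiv 0$ (else $f\equiv b$ would be constant), and your forced-flip argument runs verbatim for $\hat g$ (if $\hat g$ canalized to $0$, its canalizing variable would canalize $f$ to $b$, contradicting that $M_1$ already contains every canalizing variable of $f$). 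Both defects are repairable without altering your strategy, but they sit at the foundation of both the existence and the uniqueness inductions, so the proof does not stand as written.
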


\begin{remark}~\label{rem:layer_output}
Any variable that is canalizing (independent of the values of other variables) appears in $M_1$, the first \emph{canalizing layer}. Any variable that ``becomes'' canalizing when excluding all variables from the first layer is in $M_2$, the second layer, etc. All conditionally canalizing variables appear in one of the layers $M_1,M_2,\ldots,M_r$. On the contrary, variables that never become canalizing appear in the core polynomial. 

Variables in the same layer may have different canalizing input values but they must have the same canalized output value since they set the function to the same value. This implies that the representation of a $k$-canalizing function as in Equation \ref{eq:kcanalizing} is generally not unique (since canalizing variables in the same layer may be reordered). However, the canalizing depth as well as the core function are uniquely defined, independent of the specific representation.
\end{remark}

\begin{example}
The Boolean function $f(x_1,x_2,x_3,x_4) = x_1 \vee \bar x_2 \vee (x_3 \oplus x_4)$ possesses one canalizing layer (since $x_1=1$ or $x_2=0$ both independently canalize $f$ to $1$), canalizing depth $2$, and core function $y \oplus z$. Note that $\oplus$ denotes addition modulo 2.
\end{example}


\section{Enumeration of non-degenerate Boolean functions with defined canalizing properties}

In this section, we count the number of Boolean functions in $n$ inputs with defined properties (canalizing depth, number of canalizing layers) \emph{and} a defined number of essential inputs. We first describe solutions to the much simpler task of stratifying all $n$-input Boolean functions by canalizing depth \emph{or} number of essential inputs.

\begin{defn} \label{def:n} 
Let $N(n)$ denote the number of Boolean functions in $n$ inputs. Let $N(n,m)$ denote the number of such functions with $m$ essential inputs. Further, let $N(n,m,k)$ denote the number of such functions with canalizing depth $k$. Finally, let $C(n,k)$ denote the number of $n$-input Boolean functions with canalizing depth $k$, i.e., not stratified by number of essential variables $m$.
\end{defn}

\begin{remark}
  Formulas for $C(n,k)$ are known~\cite{he2016stratification}. For example, 
    \begin{align*}
        C(n=0,k=0) &= 2,\\
        C(n=1,k=0) &= 2,\\
        C(n=1,k=1) &= 2,\\
        C(n=2,k=0) &= 4,\\
        C(n=2,k=1) &= 4,\\
        C(n=2,k=2) &= 8,\\
        C(n=3,k=0) &= 138,\\
        C(n=3,k=1) &= 30,\\
        C(n=3,k=2) &= 24,\\
        C(n=3,k=3) &= 64.\\
        \end{align*}
With these formulas, we can compute the probability that a random $n$-input Boolean function has canalizing depth $k$. As the number of inputs increases, canalization quickly becomes an exceedingly rare property (Fig.~\ref{fig:depth_vs_variables_n0to5}). This rarity makes it all the more striking that the vast majority of regulatory functions in published biological Boolean network models -- even those with $n \ge 5$ inputs -- are canalizing, and often even nested canalizing~\cite{kadelka2024meta}. 
\end{remark}

\begin{figure}
    \centering
    \includegraphics[width=0.6\linewidth]{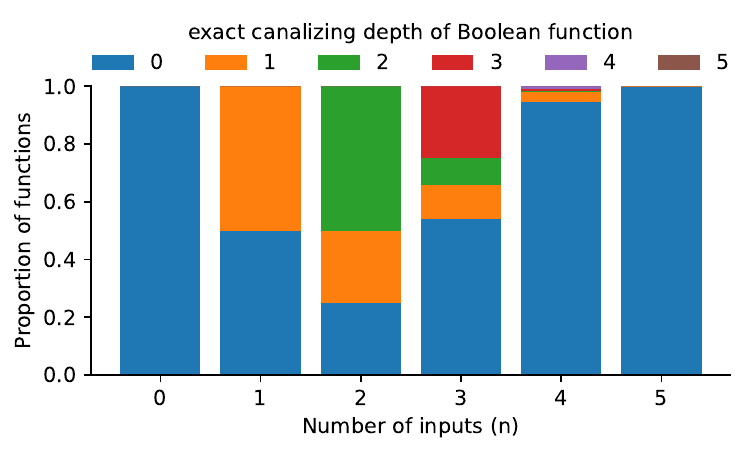}
    \caption{Proportion of $n$-input Boolean functions with a specific canalizing depth, computed using formulas for $C(n,k)$. Canalization becomes increasingly rare as $n$ increases, highlighting the need for exact formulas to study the prevalence of such functions.}
    \label{fig:depth_vs_variables_n0to5}
\end{figure} 

\begin{lemma}\label{lemma1}
By Definition~\ref{def:n}, we have 
\begin{align*}
    N(n) &= \sum_{m=0}^n N(n,m),\\
    N(n) &= \sum_{k=0}^n C(n,k),\\
    N(n,m) &= \sum_{k=0}^n N(n,m,k),\\
    C(n,k) &= \sum_{m=0}^n N(n,m,k).
\end{align*}
\end{lemma}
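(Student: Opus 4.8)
The plan is to recognize all four identities as instances of a single counting principle: if a finite set is partitioned into disjoint classes indexed by some parameter, then its cardinality equals the sum of the sizes of those classes. Each identity simply records such a partition of the set of $n$-input Boolean functions (or of a restricted subfamily) according to one of the two invariants appearing in Definition~\ref{def:n}, namely the number $m$ of essential variables and the canalizing depth $k$.

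First I would establish that both invariants are well-defined functions on the set of all $n$-input Boolean functions, each taking values in $\{0,1,\ldots,n\}$. For the number of essential variables this is immediate from Definition~\ref{def:essential}: each variable $x_i$ is either essential or not, so the count of essential variables is a uniquely determined integer, clearly bounded above by $n$. For the canalizing depth, well-definedness is exactly the content of Theorem~\ref{thm:he} together with Remark~\ref{rem:layer_output}, which guarantee that the extended monomial form is unique and hence that $k = \sum_i k_i$ does not depend on the chosen representation; by Definition~\ref{def:k_canalizing} it satisfies $0 \le k \le n$.

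With these two facts in hand, each identity follows by partitioning the appropriate set. For $N(n) = \sum_{m=0}^n N(n,m)$, the classes $\{f : f \text{ has exactly } m \text{ essential variables}\}$, for $m = 0,\ldots,n$, are pairwise disjoint and exhaust all $N(n)$ functions, so summing their sizes $N(n,m)$ recovers $N(n)$. The identity $N(n) = \sum_{k=0}^n C(n,k)$ is the analogous partition by canalizing depth. The two refined identities arise from partitioning a restricted family: fixing the number of essential variables at $m$ and then splitting by canalizing depth gives $N(n,m) = \sum_{k=0}^n N(n,m,k)$, while fixing the canalizing depth at $k$ and splitting by the number of essential variables gives $C(n,k) = \sum_{m=0}^n N(n,m,k)$.

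I do not expect any genuine obstacle; the only point requiring care is the well-definedness of the canalizing depth, which I would attribute directly to the uniqueness statement in Theorem~\ref{thm:he}. Classes that happen to be empty contribute zero and therefore do not affect the sums. For example, a canalizing function of depth $k \ge 1$ necessarily has at least one essential variable (this is precisely the role of the condition $g \not\equiv b$ in Definition~\ref{def:canalizing}), so $N(n,0,k) = 0$ for every $k \ge 1$; such vanishing terms mean that no adjustment to the stated summation ranges is needed.
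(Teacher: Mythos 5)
Your proposal is correct and matches the paper's justification: the paper gives no separate argument beyond ``By Definition~\ref{def:n}'', treating all four identities as immediate consequences of partitioning the set of $n$-input Boolean functions (or a subfamily) by the number of essential variables $m$ and/or the canalizing depth $k$, which is exactly the counting principle you spell out. Your added remarks on well-definedness of the two invariants and on empty classes contributing zero are sound but not needed beyond what the paper implicitly assumes.
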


\begin{remark}\label{rem:N_m}
We know that there are $N(n) = 2^{2^n}$ different Boolean functions in $n$ inputs. The number of such functions with $m<n$ essential variables can be derived recursively.

For any $n\geq 0$, we have $N(n,m=0) = 2$, the constant functions $0$ and $1$. To derive $N(n,m)$ for $m\in\{1,\ldots,n-1\}$, we realize that there are $\binom nm$ choices to pick $m$ out of $n$ variables and that for each choice, we have $N(m,m)$ distinct non-degenerate functions. Thus,
$$N(n,m) = \binom{n}{m}N(m,m).$$
With this and Lemma~\ref{lemma1}, we can compute $N(n,n)$, the number of non-degenerate Boolean functions in $n$ inputs:
$$N(n,n) = N(n) - \sum_{m=0}^{n-1} N(n,m).$$
\end{remark}

This is a known sequence $N(0,0),N(1,1),\ldots = 2, 2, 10, 218, 64594, \ldots$~\cite{oeis}, which can also be computed directly using the inclusion-exclusion principle,
$$N(n,n) = \sum_{m=0}^n (-1)^{n-m} \binom nm 2^{2^m}.$$

Using similar ideas, we can stratify these numbers additionally by the canalizing depth $k$ to derive a recursive formula for $N(n,m,k)$.

\begin{theorem}\label{thm:main}
    The number of Boolean functions in $n\geq 0$ inputs with $m\leq n$ essential inputs and canalizing depth $k$, $N(n,m,k)$, can be computed recursively as follows
    $$N(n,m,k) = \begin{cases}
        0 & \ \text{if}\ m<k,\\
        2 & \ \text{if}\ m=k=0,\\
        \binom{n}{m}N(m,m,k) & \ \text{if}\ k\leq m < n,\\
        C(n,k) - \sum_{i=0}^{n-1} N(n,i,k) & \ \text{if}\ m=n,\\
    \end{cases}$$
    where $C(n,k)$ denotes the number of Boolean functions in $n$ inputs with canalizing depth $k$, provided in~\cite{he2016stratification}.
\end{theorem}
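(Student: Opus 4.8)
The plan is to verify the four cases in turn, treating the three ``boundary'' cases quickly from structural facts already in hand and concentrating the real work on the generic recursion $k\leq m<n$, after which the top case $m=n$ falls out of Lemma~\ref{lemma1}.

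First I would settle $m<k$. The crucial fact, implicit in the convention $g\not\equiv b$ of Definition~\ref{def:canalizing} and made explicit in Remark~\ref{rem:layer_output}, is that every conditionally canalizing variable is essential. Indeed, $x_{\sigma(1)}$ canalizes $f$ and is therefore essential; once we restrict to $x_{\sigma(1)}\neq a_1$, the variable $x_{\sigma(2)}$ canalizes the restricted function, so flipping it changes an output, whence it is essential for $f$ as well; iterating, all of $x_{\sigma(1)},\ldots,x_{\sigma(k)}$ are essential. Hence a function of canalizing depth $k$ has at least $k$ essential variables, forcing $N(n,m,k)=0$ when $m<k$. For $m=k=0$, a function with no essential variables is constant, there are exactly two constants, and both have depth $0$ by the convention in Definition~\ref{def:k_canalizing}; thus $N(n,0,0)=2$, which also pins down the base case $N(0,0,0)=2$ of the recursion.

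The heart of the argument is the identity $N(n,m,k)=\binom{n}{m}N(m,m,k)$ for $k\leq m<n$, which I would establish by a restriction--extension bijection paralleling the one in Remark~\ref{rem:N_m}. A function $f$ on $n$ inputs with exactly $m$ essential variables determines a unique $m$-subset $S$ of essential variables, and restricting $f$ to $S$ (equivalently, deleting the $n-m$ non-essential ``dummy'' variables) yields a non-degenerate function $\tilde f$ on $m$ inputs; conversely every such $\tilde f$ together with a choice of $S$ extends uniquely back to such an $f$. Summing over the $\binom{n}{m}$ choices of $S$ supplies the binomial factor, so the only substantive point is that this correspondence preserves the canalizing depth. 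I would isolate this as a \emph{dummy-variable invariance} lemma: adjoining or deleting variables on which a function does not essentially depend leaves its canalizing depth unchanged. Using the fact from the previous paragraph that canalizing variables are essential, the canalizing chain $x_{\sigma(1)},\ldots,x_{\sigma(k)}$ never involves a dummy variable, so the chain and its length persist under restriction; and since a canalizing variable of the core would itself have to be essential, the non-canalizing status of the core is likewise unaffected by removing dummies. Both the chain length and the non-canalizing core therefore survive, so $\mathrm{depth}(f)=\mathrm{depth}(\tilde f)$.

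I expect this depth-preservation step to be the main obstacle, since it requires tying the combinatorial notion of an essential variable to the algebraic layer decomposition and confirming that passing to $\tilde f$ disturbs neither the canalizing chain nor the core. To buttress it I would appeal to the unique extended monomial form of Theorem~\ref{thm:he}: because the layer variables are canalizing and hence essential, every non-essential variable can occupy only the core polynomial $p_C$, so deleting such variables leaves the layers $M_1,\ldots,M_r$---and thus the depth $k=\sum_i k_i$---intact. With this in place the final case $m=n$ is immediate: Lemma~\ref{lemma1} gives $C(n,k)=\sum_{m=0}^{n}N(n,m,k)$, and solving for the top term yields $N(n,n,k)=C(n,k)-\sum_{i=0}^{n-1}N(n,i,k)$, in which every subtracted term is already determined by the preceding cases together with the externally known counts $C(n,k)$ from~\cite{he2016stratification}. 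Since all right-hand sides refer only to smaller values of $n$ (through $N(m,m,k)$ with $m<n$) or to the given $C(n,k)$, the recursion is well-founded, completing the proof.
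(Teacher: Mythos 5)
Your proposal is correct and takes essentially the same route as the paper's own (much terser) proof: the same four-case analysis, the same binomial-factor recursion $N(n,m,k)=\binom{n}{m}N(m,m,k)$ obtained by choosing the set of essential variables as in Remark~\ref{rem:N_m}, and the same final subtraction $N(n,n,k)=C(n,k)-\sum_{i=0}^{n-1}N(n,i,k)$ via Lemma~\ref{lemma1}. The only difference is one of rigor: you explicitly prove the two facts the paper leaves implicit, namely that conditionally canalizing variables are essential (so depth cannot exceed $m$) and that canalizing depth is invariant under adjoining or deleting dummy variables, which is indeed the substantive point hiding behind the paper's phrase ``as in Remark~\ref{rem:N_m}.''
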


\begin{proof}
Since constant functions are not canalizing, we have $N(n,m,k) = 0$ if $m<k$ and specifically,
$$N(n,m=0,k) = \begin{cases} 2 & \ \text{if}\ k=0,\\
0 & \ \text{if}\ k>0.
\end{cases}
$$
As in Remark~\ref{rem:N_m}, we can derive $N(n,m,k)$ for $m\in\{1,\ldots,n-1\}$ recursively:
$$N(n,m,k) = \binom{n}{m}N(m,m,k).$$
With this, we can compute $N(n,n,k)$, the number of non-degenerate Boolean functions in $n$ inputs with canalizing depth $k$:
$$N(n,n,k) = C(n,k) - \sum_{m=0}^{n-1} N(n,m,k).$$
\end{proof}

\begin{example}
For $n=1$, we get $N(1,0,0) = 2$, the two constant functions, and $N(1,1,1)= 2$, the functions $x$ and $\bar x$, while $N(1,1,0) = C(1,0) - N(1,0,0) = 2-2 = 0$.

For $n=2$, we know that $N(2,2) = 10$ of the 16 Boolean functions are non-degenerate. Of the 6 degenerate functions, $N(2,0) = 2$ have no essential inputs and $N(2,1) = 4$ (the functions $x_1, \bar x_1, x_2, \bar x_2$) have one essential input, which is also canalizing. Accordingly, Theorem~\ref{thm:main} yields 
\begin{align*}
    N(2,0,0) &= 2,\\
    N(2,1,0) &= 2N(1,1,0) = 0,\\
    N(2,1,1) &= 2N(1,1,1) = 4.
\end{align*}
Moreover, 8 of the 10 non-degenerate 2-input functions are nested canalizing ($b + (x_1 + a_1)(x_2 + a_2)$ for arbitrary binary choices of canalizing input values $a_1$ and $a_2$ and canalized output value $b$) and 2 are 0-canalizing (the XOR and the XNOR function). As expected, we get 
\begin{align*}
    N(2,2,0) &= C(2,0) - N(2,0,0) - N(2,1,0) = 4 - 2 - 0 = 2,\\
    N(2,2,1) &= C(2,1) - N(2,0,1) - N(2,1,1) = 4 - 0 - 4 = 0,\\
    N(2,2,2) &= C(2,2) - N(2,0,2) - N(2,1,2)  = 8 - 0 - 0 = 8.
\end{align*}

For $n=3$, we have
\begin{align*}
    N(3,3,0) &= C(3,0) - N(3,0,0) - N(3,1,0) - N(3,2,0) = 138 - 2 - 0 - 6 = 130,\\
    N(3,3,1) &= C(3,1) - N(3,0,1) - N(3,1,1) - N(3,2,1) = 30 - 0 - 6 - 0 = 24,\\
    N(3,3,2) &= C(3,2) - N(3,0,2) - N(3,1,2) - N(3,2,2) = 24 - 0 - 0 - 24 = 0,\\
    N(3,3,3) &= C(3,3) - N(3,0,3) - N(3,1,3) - N(3,2,3) = 64 - 0 - 0 - 0 = 64.
\end{align*}
\end{example}

Theorem~\ref{thm:main} allows for the exact computation of the prevalence of $k$-canalization among non-degenerate Boolean functions with $n$ inputs (Fig.~\ref{fig:depth_vs_essential_variables_n0to5}).
A comparison with Fig.~\ref{fig:depth_vs_variables_n0to5} reveals substantial differences for functions with few inputs.
Because such low-input functions constitute the majority of regulatory update rules in published biological network models -- the mean in-degree across 122 networks is 2.56~\cite{kadelka2024meta} -- this result represents an important theoretical advance, providing a precise quantification of the extent to which canalization is over-represented in biological systems.
Moreover, only closed-form expressions permit reliable estimates of the abundance of $k$-canalization for Boolean functions with $n \ge 5$ inputs (Fig.~\ref{fig:depth_vs_essential_variables_n0to5}B), where the double-exponential growth of the state space, $2^{2^n}$, renders any simulation-based approach computationally infeasible.

\begin{figure}
    \centering
    \includegraphics[width=0.6\linewidth]{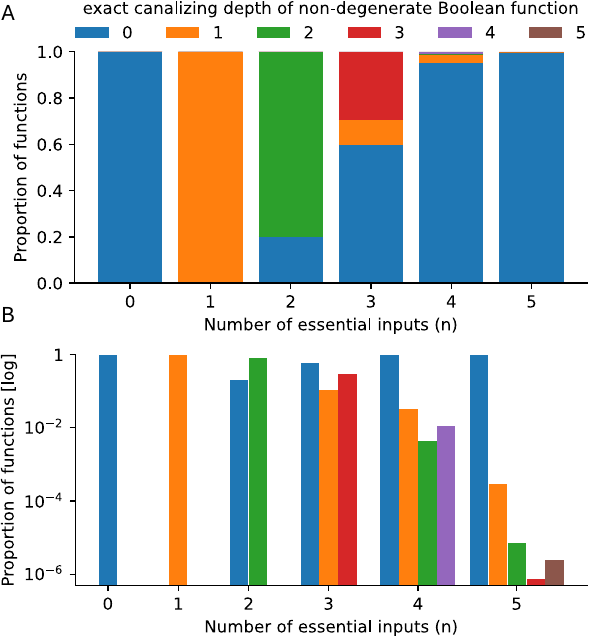}
    \caption{Proportion on a (A) linear and (B) log scale of $n$-input non-degenerate Boolean functions with a specific canalizing depth, computed using the formula for $N(n,m=n,k)$ from Theorem~\ref{thm:main}. These exact formulas enable an exploration of the prevalence of canalization even in functions with higher inputs ($n\geq 5$) where it is very rare.}
    \label{fig:depth_vs_essential_variables_n0to5}
\end{figure} 

\begin{remark}
    In the previous example, we see that $N(1,1,0) = N(2,2,1) = N(3,3,2) = 0$ although $C(1,0), C(2,1), C(3,1) \neq 0$. This hints at a general rule $N(n,m=n,k=n-1) = 0$, implying that non-degenerate functions cannot have canalizing depth $n-1$. This is a known fact~\cite{he2016stratification}: If a non-degenerate function had canalizing depth $n-1$, the subfunction $g$ (see Definition~\ref{def:k_canalizing}), which is evaluated if all $n-1$ conditionally canalizing variables receive their non-canalizing input value, would depend only on a single variable, i.e., we would have $g = x_i$ or $g=\bar x_i$, meaning $g$ is canalizing as well and the canalizing depth must be $n$. In other words, NCFs can only have $r\in\{1,2,\ldots,n-1\}$ canalizing layers because the last layer contains at least two variables.
\end{remark}

\begin{remark}
The quantity $\sum_{k=1}^{n} N(n, m = n, k = k)$ is of particular interest, as it gives the total number of non-degenerate canalizing Boolean functions with $n$ inputs. The ratio
$$P_{\text{canalizing}}(n) := \frac{\sum_{k=1}^n N(n,m=n,k=k)}{N(n,m=n)}$$ 
thus represents the prevalence of canalizing functions among all non-degenerate $n$-input functions. Similarly, $N(n, m = n, k = n)$ yields the number of non-degenerate NCFs. This quantity is identical to $C(n, k = n)$, since by definition all NCFs are non-degenerate. We let $$P_{\text{NCF}}(n) := \frac{N(n, m = n, k = n)}{N(n,m=n)} = \frac{C(n, k = n)}{N(n,m=n)}$$ denote the prevalence of NCFs among non-degenerate functions.

To quantify the effect of ignoring degeneracy, we compare $P_{\text{canalizing}}(n)$ and $P_{\text{NCF}}(n)$ to
$$\tilde P_{\text{canalizing}}(n) := \frac{\sum_{k=1}^n C(n,k=k)}{N(n)} \quad \text{and} \quad \tilde P_{\text{NCF}}(n) := \frac{C(n,k=n)}{N(n)},$$
which measure the prevalence of canalizing and nested canalizing functions among all $n$-input Boolean functions. The discrepancy between $\tilde{P}_{*}(n)$ and ${P}_{*}(n)$, illustrated in Fig.~\ref{fig:ratio}, captures the bias introduced when degenerate functions are not excluded. While the absolute difference $|\tilde{P}_{*}(n) - P_{*}(n)|$ naturally decreases as canalization itself becomes rarer for larger $n$, it is more informative to consider their $\log_2$-fold-change
$$\Delta_*(n) = \log_2(\frac{\tilde{P}_{*}(n)}{{P}_{*}(n)}),$$
which expresses how many powers of two the naïve estimate $\tilde{P}_{*}(n)$ over- or under-states the true prevalence among non-degenerate functions. 
For instance, the value of $\Delta_{\text{canalizing}}(1) = \Delta_{\text{NCF}}(1) = -1$
indicates that correctly accounting for degeneracy doubles the frequency of canalizing and nested canalizing functions in $n=1$ input (Fig.~\ref{fig:ratio}). For any $n$, the frequency of NCFs is always at least minimally higher when considering only non-degenerate functions (because $N(n,m=n) < N(n)$ for all $n$). 
Interestingly, the frequency of non-degenerate canalizing functions is higher than that of all canalizing functions for $n\in\{1,2\}$ but lower for $n\in\{3,4,5\}$.
\end{remark}

\begin{figure}
    \centering
    \includegraphics[width=0.5\linewidth]{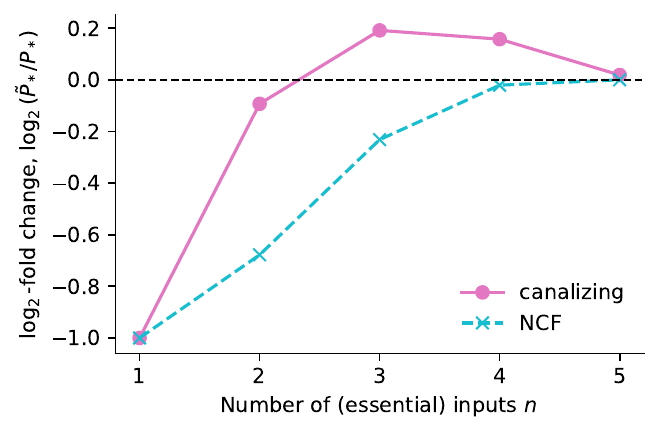}
    \caption{Relative bias introduced by neglecting degeneracy when estimating the prevalence of canalizing and nested-canalizing Boolean functions. For each number of inputs $n$, the plotted values represent the $\log_2$-fold change $\Delta_{*}(n)=\log_{2}(\tilde P_{*}(n)/P_{*}(n))$, which quantifies how many powers of two the naïve estimate $\tilde P_{*}(n)$ (based on all Boolean functions) over- or under-represents the true prevalence among non-degenerate functions. Positive values correspond to an apparent enrichment of canalization when degeneracy is ignored.}
    \label{fig:ratio}
\end{figure}

Boolean $n$-input functions have also been classified by canalizing depth $k$ \emph{and} number of canalizing layers $r\leq k$~\cite{kadelka2017influence}. 

\begin{defn}
    Let $C(n,k,r)$ denote the number of $n$-input Boolean functions with canalizing depth $k$ and $r\leq k$ canalizing layers. Further, let $N(n,m,k,r)$ denote the number of such functions with $m$ essential inputs.
\end{defn}

Formulas for $C(n,k,r)$ were derived in~\cite{dimitrova2022revealing}. Using the same arguments as in Theorem~\ref{thm:main}, we can thus derive the number of such functions that are non-degenerate ($N(n,n,k,r)$), and more generally, the number of such functions that have a defined number of essential inputs.

\begin{cor}\label{cor:main}
    The number of Boolean functions in $n\geq 0$ inputs with $m\leq n$ essential inputs and canalizing depth $k$, $N(n,m,k)$, can be computed recursively as follows
    $$N(n,m,k,r) = \begin{cases}
        0 & \ \text{if}\ m<k\ \text{or}\ k<r,\\
        2 & \ \text{if}\ m=k=r=0,\\
        \binom{n}{m}N(m,m,k,r) & \ \text{if}\ k\leq m < n,\\
        C(n,k,r) - \sum_{i=0}^{n-1} N(n,i,k,r) & \ \text{if}\ m=n,\\
    \end{cases}$$
    where $C(n,k,r)$ denotes the number of Boolean functions in $n$ inputs with canalizing depth $k$ and $r$ canalizing layers, provided in~\cite{dimitrova2022revealing}.
\end{cor}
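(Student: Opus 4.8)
The plan is to reproduce the proof of Theorem~\ref{thm:main} almost verbatim, carrying the extra layer-count $r$ through every case, and to isolate the one genuinely new point: that the number of canalizing layers, like the canalizing depth, is unaffected by the presence of non-essential variables. All four cases of the claimed recursion then match those of Theorem~\ref{thm:main} with $r$ appended.

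First I would settle the boundary cases. A function with $m$ essential variables has canalizing depth at most $m$, so $N(n,m,k,r)=0$ whenever $m<k$; and since the $r$ layers $M_1,\ldots,M_r$ partition the $k$ conditionally canalizing variables into $r$ non-empty blocks, we always have $r\leq k$, which forces $N(n,m,k,r)=0$ when $k<r$. The only functions with no essential variables are the two constants, which are non-canalizing and thus have depth and layer count zero; this gives $N(n,0,0,0)=2$ and $N(n,0,k,r)=0$ otherwise.

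The heart of the argument is the range $1\leq m<n$, where I would count via the bijection between $n$-input functions with exactly $m$ essential variables, depth $k$, and $r$ layers, on one hand, and pairs consisting of a choice of the $m$-element essential-variable set (there are $\binom nm$ of these) together with a non-degenerate $m$-variable function of depth $k$ and $r$ layers (there are $N(m,m,k,r)$ of these), on the other. The forward map restricts a function to its essential variables; the reverse map re-introduces the $n-m$ non-essential variables as dummies. The step that demands care — and the main obstacle — is verifying that this map preserves both $k$ and $r$. Invariance of $k$ is already implicit in Theorem~\ref{thm:main}; the new content is the invariance of $r$. By Remark~\ref{rem:layer_output} a variable lies in some layer $M_1,\ldots,M_r$ exactly when it is conditionally canalizing, and a non-essential variable can never be canalizing at any stage of the cascade of Definition~\ref{def:k_canalizing}. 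Hence the dummy variables sit entirely outside the layers, so adding or deleting them leaves the monomials $M_1,\ldots,M_r$ unchanged; in particular both their number $r$ and the total depth $k=\sum_{i}k_i$ are preserved. This yields $N(n,m,k,r)=\binom nm N(m,m,k,r)$.

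Finally, for $m=n$ I would use the layer-refined analogue of Lemma~\ref{lemma1}, namely $C(n,k,r)=\sum_{m=0}^{n}N(n,m,k,r)$, which holds because every function of depth $k$ with $r$ layers has a well-defined number $m$ of essential variables. Solving for the top term gives $N(n,n,k,r)=C(n,k,r)-\sum_{m=0}^{n-1}N(n,m,k,r)$, with $C(n,k,r)$ taken from~\cite{dimitrova2022revealing}. Everything except the layer-invariance claim is a direct transcription of Theorem~\ref{thm:main}, so once that claim is secured the corollary follows.
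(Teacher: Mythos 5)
Your proposal is correct and follows exactly the route the paper intends: the paper gives no separate proof for Corollary~\ref{cor:main}, stating only that it follows ``using the same arguments as in Theorem~\ref{thm:main},'' which is precisely your transcription of that proof with the layer count $r$ carried through. Your explicit verification that non-essential variables lie outside every layer (so that both $k$ and $r$ are invariant under adding or removing dummy variables) is the one point the paper leaves implicit, and you handle it correctly.
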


\section{Discussion}
We have derived exact recursive formulas that enumerate Boolean functions jointly by the number of essential variables and the canalizing depth, thereby determining the number of non-degenerate canalizing functions for any number of inputs $n$. 
These results extend previous enumerations of canalizing and nested canalizing functions by incorporating degeneracy, an often-overlooked but biologically meaningful property that distinguishes functions with fully redundant inputs from those in which every variable is essential and contributes to the output.

While the rarity of canalization among random Boolean functions is well established, our results refine this understanding by providing exact prevalence estimates that explicitly exclude degenerate functions, thereby defining a more accurate null model for comparison with biological logic. Yet empirical analyses show that most regulatory rules in biological network models -- even those involving five or more inputs -- are canalizing or even nested canalizing~\cite{daniels2018criticality,kadelka2024meta}. 
The formulas derived here therefore provide an improved quantitative benchmark for measuring the degree of over-representation of canalization in biological systems. 
This comparison highlights canalization not as a statistical artifact of random logic but as an evolved design principle that promotes robustness and functional stability.

Beyond their biological relevance, these enumeration results are of intrinsic mathematical interest. 
They establish an exact correspondence between structural constraints on Boolean functions and the combinatorial growth of function space, offering insight into how hierarchical control and redundancy interact. 
As the space of Boolean functions expands double-exponentially with $n$, direct enumeration quickly becomes intractable (practically for $n\geq 5$), making recursive and closed-form approaches indispensable for understanding the combinatorial architecture of function space.
This study lays the groundwork for future work aimed at obtaining closed-form expressions, extending the analysis to multistate or biased functions, and exploring how combinatorial structure constrains dynamics in complex logical and regulatory systems.

\section*{Acknowledgments}
This research was funded in part by the U.S. National Science Foundation (NSF), grant numbers DMS-2424632 and DMS-2451973.

\section*{Data and Code Availability}
All code used to generate the figures and numerical results is available at 
\url{https://github.com/ckadelka/nondegenerate-canalization}.

\bibliographystyle{unsrt}
\bibliography{ref}

\end{document}